\documentclass{article}

\usepackage[english]{babel}
\usepackage{tikz}

\usepackage{pgfplots, pgfplotstable}

\usepackage[letterpaper,top=2cm,bottom=2cm,left=3cm,right=3cm,marginparwidth=1.75cm]{geometry}

\usepackage{amsmath}
\usepackage{amsthm}
\usepackage{graphicx}
\usepackage[colorlinks=true, allcolors=blue]{hyperref}
\usepackage{caption}
\usepackage{subcaption}

\newtheorem{theorem}{Theorem}
\newtheorem{lemma}{Lemma}

\usepackage{nicematrix}
\usepackage{array, makecell}
\usepackage{floatrow}

\usepackage{indentfirst}

\usepackage{authblk}

\title{Price of Anarchy of Multi-Stage Machine Scheduling Games}
\author[1]{Ho-Lin Chen}
\author[1]{Pin-Ju Huang}
\affil[1]{Department of Electrical Enginnering, National Taiwan University}
\date{}                     
\setcounter{Maxaffil}{0}

%

\begin{document}
\maketitle              

\begin{abstract}
In this paper, we extend the discussion of the price of anarchy of machine scheduling games to a multi-stage machine setting. The multi-stage setting arises naturally in manufacturing pipelines and distributed computing workflows, when each job must traverse a fixed sequence of processing stages. While the classical makespan price of anarchy of $2-\frac{1}{m}$ has been established for sequential scheduling on identical machines, the efficiency loss in multi-stage scheduling has, to the best of our knowledge, not been previously analyzed.

We assume that each task follows a greedy strategy and gets assigned to the least-loaded machine upon arrival at each stage. Notably, we observe that in multi-stage environments,  greedy behavior generally does not coincide with a subgame perfect Nash equilibrium. We continue with analyzing the equilibrium under greedy choices, since it is logical for modeling selfish agents with limited computational power, and may also model a central scheduler performing the common least-load scheduling heuristics. Under this model, we first show that in single-stage scheduling, greedy choice again yields an exact price of anarchy of $2-\frac{1}{m}$. In multi-stage scheduling, we show that the completion time from one stage to the next increases by at most two times the maximum job execution time. Using this relationship, we derived the price of anarchy of multi-stage scheduling under greedy choices to lie within $[2-\frac{1}{m_{\max}}, 3 - \frac{1}{m_{\max}}]$, where $m_{\max}$ denote the maximum number of machines in one stage.

\end{abstract}
\section{Introduction}
As technology becomes more and more widespread and incorporated into every corner of our lives, autonomous independent agents will play an increasingly important role in decision-making. Environments consisting of multiple selfish, autonomous agents may involve competition over limited resources, such as competing over computing resources or physical constraints. A real-life example is the competition among individual computers for finite internet bandwidth \cite{papadimitriou2001algorithms}.  When autonomous agents act to optimize their own interests, the resulting equilibrium may not be as efficient as a fully coordinated system. The social cost of selfish behavior is often compared with the optimal social cost, under the measure of price of anarchy.


The inefficiency introduced by the selfish behavior of independent agents has been extensively studied in single stage machine scheduling scenarios, where the jobs are competing over one single set of resources. However, to the best of our knowledge, limited research has been done regarding the cost of selfish behavior in multi-stage machine scheduling, where the jobs have a sequence of procedures and need to compete over several resources. A practical example of the multi-stage process is the car assembly process. An automobile chassis moves along an assembly line, with each station performing a specific task in fixed order, such as such as welding followed by painting and assembly in a manufacturing line. Each automobile needs to follow the same assembly order. The resulting system can be modeled as a multi-stage scheduling problem. Automobiles move through the assembly line and enter a new stage of machines executing the next procedure. In this multi-stage machine setting, a natural assumption is that all welding machines have identical processing speed, all painting machines have identical processing speed, and all engine installation machines have identical processing speed. However, the processing speed of welding machines need not be the same as that of painting machines, and the same is true for other types of machines. Therefore, in our model, we assume a varying speed $s_i$ across different stages, but machines in the same stage have identical speed. By comparing the resulting Nash equilibrium with the best possible scheduling that minimizes the maximum completion time, we aim to quantify the inefficiencies inflicted by selfish behavior.

Previous research done on machine scheduling games mainly focuses on Nash equilibrium of selfish players, where no player can improve its performance by changing their behavior. The social cost of the worst Nash equilibrium is compared with the the optimal scheduling, and the ratio between the two is quantified as the price of anarchy. In our multi-stage scheduling game, we assume each player employs a greedy strategy, choosing the machine with least load when it is released to a new stage. This corresponds to the case when players command their jobs to take the fastest machine when possible. It may also occur when each job is scheduled under a central coordinated agent, while the scheduler policy is to always assign task to the least loaded machine. As scheduling over multi-stages is NP-hard, the greedy scheduling policy is a simple, naive heuristic approach to minimizing the social cost. 

Also, each job may arrive to a stage at different times, so machines follow a first-come-first-serve rule. We then compare the social cost of the optimal scheduling with the social cost when each player employs greedy strategy each stage. Note that greedy choices done sequentially does not always result in a subgame perfect Nash equilibrium (more discussion in appendix~\ref{app:Greedy-choice-SPNE}). In the single-stage setting, greedy choices of all agents does lead to a subgame perfect Nash equilibrium, but in multi-stage scheduling games, greedy choices may result in non-Nash equilibrium configurations. Therefore, our results of the ratio between the social cost of greedy choices and the social cost of optimal scheduling can be described as the \emph{price of greediness}, different from the price of anarchy defined under Nash equilibrium. To compare to previous results, we will still use the price of anarchy to refer to the price of greediness in this paper.

\subsection{Related Work} \label{sec:Related-work}
In a selfish game setting, the goal of all players is to minimize their own cost following their selfish interests. The resulting stable configuration is called a \emph{Nash equilibrium}\cite{NashJohnF.1950EPin}. To quantify the inefficiencies caused by selfish behavior, the \emph{price of anarchy}\cite{koutsoupias1999worst, roughgarden2002bad} is introduced. The price of anarchy is defined as the ratio between the social cost of the worst Nash equilibrium and the social cost of the optimal configuration.

The selfish machine scheduling game involves $n$ independent selfish agents selecting from $m$ machines to minimize its own cost. The algorithmic study of machine scheduling games first dates back to the paper by Koutsoupias and Papadimitriou \cite{koutsoupias1999worst}, introducing a basic network of a fixed source and a fixed destination with $m$ parallel links. The load is formulated as the sum of the load on each individual path. The authors showed that the routing problem with $m$ parallel links is equivalent to a scheduling problem with $m$ machines with the given job sizes. A worst-case coordination ratio (equivalent to the price of anarchy) of exactly $\frac{3}{2}$ is proven for the parallel $2$-link network. 

When independent agents decide which machines to select, there are two types of models. A \emph{pure strategy} requires agents to choose only one machine, while a \emph{mixed strategy} allows agents to assign probabilities over several machines. A pure strategy Nash equilibrium always exists in selfish machine scheduling games under the KP model (the Koutsoupias and Papadimitriou model) \cite{fotakis2009structure, vocking2007selfish}. Even though the existence of Nash equilibria is guaranteed, it is NP-hard to find the best and the worst Nash equilibrium of a scheduling game\cite{angel2007algorithmic,vocking2007selfish}.

In the paper of Hassin and Yovel \cite{hassin2015sequential}, they extended the KP model to the discussion of sequential price of anarchy on $m$ identical machines. Under the assumption that tasks can enter in any arbitrary order, they proved that the sequential price of anarchy is $2-\frac{1}{m}$. They also concluded that under sequential choice, the resulting Nash equilibrium might be highly nontrivial and counterintuitive. An agent might choose a machine other than the least loaded one during its turn to steer subsequent agents toward a more favorable equilibrium. 



Depending on the relationship of task execution times of each machine $t_{im}$ (execution time of task $i$ on machine $m$), the formulation of machines can be classified into four types \cite{immorlica2009coordination}. 
\begin{itemize}
    \item \emph{Identical machines} $(P)$: $t_{i m} = t_{i m'} = t_i$ for job $i$ and machines $m$ and $m'$. Execution time of a task is identical in all machines.
    \item \emph{Related machines} $(Q)$: $t_{i m} = \frac{j_i}{s_m}$, where $j_i$ is the size of job $i$ and $s_m$ is the processing speed of machine $m$. The processing time is the job size divided by the speed of each machine.
    \item \emph{Restricted scheduling} or \emph{Bipartite scheduling} $(B)$: A task can only be executed on a limited set of machines. $t_{im} = t_i$ if $m \in M_i$ or $\infty$ otherwise.
    \item \emph{Unrelated machines} $(R)$: $t_{im}$ is an arbitrary positive number.
\end{itemize}

Previous results on the makespan price of anarchy of machine scheduling games for different types of machines is summarized in table~\ref{tab:prev-work}.

\begin{table}[h]
    \centering
    \def\arraystretch{2}
    \begin{NiceTabular}{| c | c | c | c | c |}
    
        \hline 
             Machine Type & \makecell{Number of\\Machines} & \makecell{Number of\\Stages} & Type of Equilibrium & \makecell{Makespan Price of Anarchy}\\ 
        \hline \hline
        Identical & 2 & 1 & \makecell{Pure Nash Equilibrium} &
        $\frac{3}{2}$ \cite{koutsoupias1999worst} \\ \hline
        Identical & $m$ & 1 & \makecell{Subgame Perfect\\Nash Equilibrium} &
        $2-\frac{1}{m}$ \cite{hassin2015sequential} \\ \hline
        Related & $2$ & 1 & \makecell{Mixed Nash Equilibrium} &
        $\phi = \frac{\left( 1 + \sqrt{5}\right)}{2}$ \cite{koutsoupias1999worst} \\ \hline
        Related & $m$ & 1 & \makecell{Pure Nash Equilibrium} &
        \makecell{$O\left( \min \{ \frac{\log m}{\log \log m}, \log \frac{s_1}{s_m} \}\right)$ \\ assuming $s_1 \geq \cdots \geq s_m$}  \cite{czumaj2007tight} \\ \hline

        Restricted & $m$ & 1 & \makecell{Mixed Nash Equilibrium} &
        $\Theta \left( \frac{\log m}{\log \log m}\right)$ \cite{awerbuch2006tradeoffs,gairing2010computing} \\ \hline
        
        Unrelated & $2$ & 1 & \makecell{Pure Nash Equilibrium} &
        $\infty$ \cite{chen2020coordination} \\ \hline
        Unrelated & $m$ & 1 & \makecell{Subgame Perfect
        \\Nash Equilibrium} &
        $\Omega(n)$, $n$ is task number \cite{chen2020coordination} \\ \hline
        Identical & $m$ & 1 & \makecell{Greedy Choices} & $2-\frac{1}{m}$ (Theorem~\ref{thm:PoA-single-stage-lower-bound},~\ref{thm:PoA-single-stage-upper-bound}) \\ \hline
        \makecell{Identical\\in each stage} & \makecell{$m_i$\\ for stage $i$} & $k$ & \makecell{Greedy Choices} & $[2-\frac{1}{m_{\max}}, 3-\frac{1}{m_{\max}}]$ (Theorem~\ref{thm:PoA-multi-stage-lower-bound},~\ref{thm:PoA-multi-stage-upper-bound}) \\ \hline
    \end{NiceTabular}
    \caption{Summary of previous work and our work on one stage machine scheduling}
    \label{tab:prev-work}
\end{table}

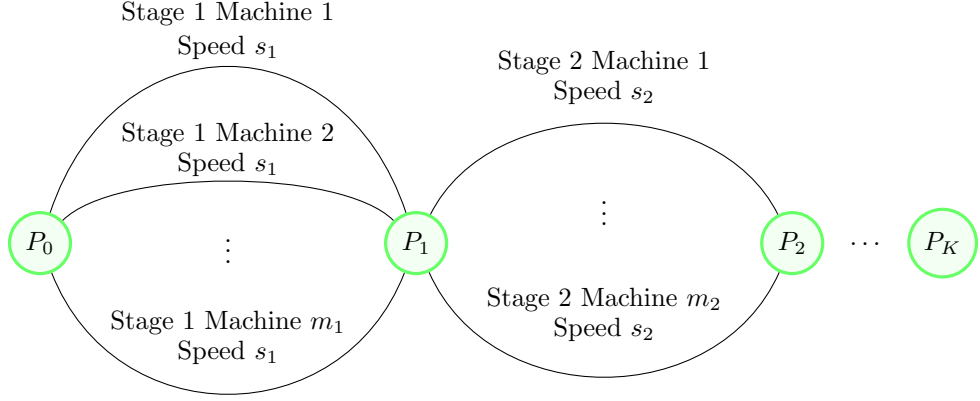
\begin{figure} [h]
    \centering
    \begin{tikzpicture}[
        roundnode/.style={circle, draw=green!60, fill=green!5, very thick, minimum size=7mm},
        squarednode/.style={rectangle, draw=red!60, fill=red!5, very thick, minimum size=5mm},
    ]
        \node [roundnode] (source) {$P_0$};
        \node [roundnode] (midpoint1) at (5,0) {$P_1$};
        \node [roundnode] (midpoint2) at (10,0) {$P_2$};
         \node  at (11, 0) {$\dots$};
        \node [roundnode] (sink) at (12,0) {$P_K$};
        \draw (source) .. controls (,3) and (4,3) .. (midpoint1);
        \node at (2.5, 3.05) {Stage $1$ Machine $1$};
        \node at (2.5, 2.6) {Speed $s_1$};

        \draw (source) .. controls (1,1.0) and (4,1.0) .. 
        (midpoint1);
        \node  at (2.5, 1.45) {Stage $1$ Machine $2$};
        \node at (2.5, 1.05) {Speed $s_1$};

         \node  at (2.5, 0) {$\vdots$};

        \draw (source) .. controls (1,-2.55) and (4,-2.55) .. (midpoint1);
        \node  at (2.5, -1.05) {Stage $1$ Machine $m_1$};
        \node at (2.5, -1.45) {Speed $s_1$};

        \draw (midpoint1) .. controls (6,2) and (9,2) .. (midpoint2);
        \node at (7.5, 2.4) {Stage $2$ Machine $1$};
        \node at (7.5, 2) {Speed $s_2$};

         \node  at (7.5, 0.5) {$\vdots$};

        \draw (midpoint1) .. controls (6,-2.25) and (9,-2.25) .. 
        (midpoint2);
        \node  at (7.5, -0.75) {Stage $2$ Machine $m_2$};
        \node at (7.5, -1.15) {Speed $s_2$};
        
    \end{tikzpicture}
    \caption{Model of our multi-stage scheduling game. All tasks get released at $P_0$ at $t=0$ in a given order. Tasks are sequentially assigned to machines in the first stage. Once execution in the first stage is finished, it enters $P_1$ and gets assigned to machine in the second stage. The tasks are greedy and is always assigned to the machine with least load when it enters a new stage. Each task repeat the process of selecting machines, queuing, and execution until it reaches the end of the $k$ stage network $P_k$. This network models a procedure where tasks of different sizes require an identical $k$ step procedure, and the goal is to minimize the maximum completion time.}
    \label{fig:Network-intro}
\end{figure}

In this paper, we propose the \emph{multi-stage machine scheduling game}, an extension to the traditional one-stage scheduling game. Instead of each job only requiring one procedure and only selecting one machine, each job now requires a series of procedures and needs to choose one machine for each procedure. It can be represented by the job passing through a series-parallel link network (See figure~\ref{fig:Network-intro}), where each link represents a machine. In the field of multi-operation scheduling, our multi-stage scheduling game models the problem of flow shop scheduling \cite{mao1995multi} as each task requires an identical sequence of $k$ procedures.

In terms of the processing sequence of jobs on a single machine, our model follows the first-come-first-serve rule. The idea comes from the dynamic flow model \cite{aronson1989survey, kotnyek2003annotated}, where the load consists of infinitesimal flow particles traversing a network and queuing up at an edge following the first come first serve rule. Our multi-stage scheduling game can be viewed as a discrete version of the dynamic flow model. Jobs that arrive earlier can select their machines and enter the queue first.

In terms of how jobs select machines in each stage, we assume that each individual job follows a greedy strategy and chooses the least loaded machine when it arrives to a new stage. At the first stage, we assume at all tasks are released at $t = 0$, In the same time, we break ties by arbitrary order for the tasks to decide which one has the priority to choose machines first. 

\subsection{Our Contributions}
Our main objective is to quantify the worst case scenario of greedy agents in a multi-stage machine scheduling setting. We first made the observation that the equilibrium under sequential greedy choice is different from the subgame perfect Nash equilibrium. However, in single-stage machine scheduling, the two definitions are identical.  Then, we compare the worst case makespan of under greedy selection of all players each stage with the makespan of the optimal scheduling. We proved an exact worst case price of anarchy of $2-\frac{1}{m}$ for the single-stage scenario. In the multi-stage scenario, we proved a relationship where the release times and completion times increases at most two times the execution time of the largest job when it goes to the next stage. Using this relationship, we were able to bound the price of anarchy of multi-stage scheduling under greedy choice between $2-\frac{1}{m_{\max}}$ and $3-\frac{1}{m_{\max}}$. \\

Our paper is organized as follows. In section~\ref{sec:preliminaries}, we introduce our new model of multi-stage machine scheduling games, and discuss how we model job processing and machine selection. Following up, in section~\ref{sec:Relation-release-completion}, we discuss the relationship between the release time and the completion time of jobs in one stage of the multi-stage scheduling game. The, we use the relationship between release time and completion time to quantify the worst case price of anarchy of single-stage and multi-stage machine scheduling games.

\section{Preliminaries} \label{sec:preliminaries}
Consider $n$ jobs waiting to be executed in $k$ stages. A job must complete its execution on stage $i$ before moving on to stage $j = i+1$. In each stage $i \in \{1, 2, \cdots k\}$, a job is only assigned to one machine to execute its job, and the job execution process is non-preemptive, meaning that the execution of any task cannot be interrupted. Each stage $i \in \{1, 2, \cdots k\}$ consists of $m_i$ identical machines, where $s_i$ is the speed of the machines. Note that different stages may have different machine speed.

In stage $i \in \{1, 2, \cdots k\}$, the time when a job finishes completion in stage $i-1$ and enters stage $i$ is called its release time. The time it completes execution in stage $i$ is called its completion time. The release time in stage $i$ ordered from small to large is labeled $r^i_1, \ r^i_2, \cdots, r^i_n$. The job is also labeled based on its release time at each stage. Hence, the job $p^i_j$ denotes the job with release time $r^i_j$ at stage $i$. The completion time $c^i_j$ is defined as the completion time of job $p^i_j$.

The job size of job $p^i_j$ is also written as $p^i_j$. In our setting, the job execution time $t^i_j$ follows the assumption of uniform machines. 
\begin{equation}
    t^i_j = \frac{p^i_j}{s_i} 
\end{equation}

The load of the machine $\alpha$ at release time $r^i_j$, $L_{\alpha}(r^i_j)$ is defined as the speed of the machine times the completion time of the last job already assigned to machine $\alpha$ before job $p^i_j$ enters stage $i$. Notice that if many different job have same release time, $L_{\alpha}(r^i_j)$ may still be different for different jobs. Similarly, we define the load of the machine $\alpha$ at time $t$ to be the speed of the machine times the completion time of the last job already assigned to machine $\alpha$ before time $t$. At time $t$, a machine $\alpha$ in stage $i$ is \emph{idle} if its load $L_\alpha (t)$ is smaller or equal to $s_i \cdot t$; otherwise, the machine is \emph{occupied} at time $t$. If a job $p^i_j$ chooses machine $\alpha$ at stage $i$, the job $p^i_j$ starts execution in stage $i$ when it has entered stage $i$ and machine $\alpha$ is idle. Therefore, the starting time when machine $\alpha$ process job $p^i_j$ is $\max\{\frac{L_\alpha(r^i_j)}{s_i}, r^i_j\}$. The completion time of job $p^i_j$ at stage $i$ is $c^i_j = \max\{\frac{L_\alpha(r^i_j)}{s_i}, r^i_j\} + t^i_j$, and after job $p^i_j$ choose machine $m$, the load of machine $m$ also increases to $s_i \cdot \left( \max\{\frac{L_\alpha(r^i_j)}{s_i}, r^i_j\} + t^i_j \right)$.
 
At the first stage, each job is released at time $t = 0$, and chooses its machines to execute on each stage from stage $1$ to stage $k$. In the Nash equilibrium scheduling, no job can reduce its final job completion time (job completion time at stage $k$) if it deviates from its current choice of machines. The maximum final job completion time out of all the jobs is called \emph{the makespan}. In the optimal scheduling, we assume that a central scheduler arranges the job to any machines with the goal of minimizing the makespan. The ratio of the worst makespan across all Nash equilibria to the optimal makespan is defined the \emph{price of anarchy}.

\section{Price of Anarchy of Multi-Stage Greedy Machine Scheduling} \label{sec:Relation-release-completion}

In this section, we will show under our constructed recurrence relationship on release times, the bound of the completion time and the bound of the release time of each stage differs by at most a constant times of the execution time of the maximum job $t_{\max}$. Then, we will discuss in detail the upper bound and the lower bound of the price of anarchy in single-stage and multi-stage greedy machine scheduling.

\begin{figure} [h]
    \centering
    \begin{tikzpicture}[
        roundnode/.style={circle, draw=green!60, fill=green!5, very thick, minimum size=7mm},
        squarednode/.style={rectangle, draw=red!60, fill=red!5, very thick, minimum size=5mm},
    ]
        \node [roundnode] (source) {s};
        \node [roundnode] (sink) at (5,0) {t};
        \draw[->] (source) .. controls (1,3) and (4,3) .. (sink);
        \node at (2.5, 2.6) {Machine $1$};

        \draw[->] (source) .. controls (1,1.5) and (4,1.5) .. 
        (sink);
        \node  at (2.5, 1.45) {Machine $2$};

         \node  at (2.5, 0) {$\vdots$};

        \draw[->] (source) .. controls (1,-2.55) and (4,-2.55) .. 
        (sink);
        \node  at (2.5, -1.45) {Machine $m$};
        
    \end{tikzpicture}
    \caption{An illustration of a M-machine stage, where jobs are released at source $s$ with release time $r_j$, processed through one of the machines from Machine 1 to m, then completed with completion time $c_j$}
    \label{fig:enter-label}
\end{figure}
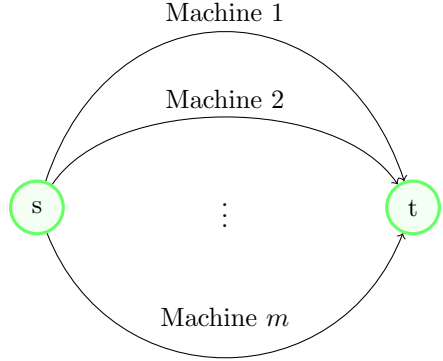

First, we will focus on the release time and completion time relationships in a single stage. Since we are focusing on a single stage, we will omit the stage index of $p^i_j$, $t^i_j$, $r^i_j$, $c^i_j$ and simply call the job size, job execution time, release time and completion time as $p_j$, $t_j$, $r_j$ and $c_j$. Jobs might switch order between stages, so we first consider the time bound of the completion time $c_j$, which is the completion time of job $p_j$, then consider how the reordering between jobs maintains the inequality. 


 
    
\begin{lemma} \label{thm:Induction-kTmax+sum-Unordered time}
Given a stage with $m$ machines and speed $s$, a total of $n$ jobs are passed into this stage with various release times $r_j$ ordered from first to last, satisfying $r_1 \leq r_2 \leq \dots \leq r_n$, while $p_j$ denotes the job size of the task released at time ${r_j}$. The denominator $(ms)^*$ is an arbitrary positive number that satisfies $(ms)^* \leq ms$. The size of the largest task is written as $p_{\max}$, and the execution time of the largest task is written as $t_{\max}$. If the job release time $r_j$ satisfies
\begin{equation}
    r_j \leq T + \frac{1}{(ms)^*} \sum_{l = 1}^{j-1} p_l    
    \qquad \forall j \in \{1, 2, \dots, n\}
\end{equation}

Then, the job completion time $c_j$ of each task $p_j$ satisfies

\begin{equation}
      c_j  \leq T + \left( \frac{2m-1}{ms} \right) \cdot p_{\max} + \frac{1}{(ms)^*} \sum_{l = 1}^{j-1} p_l      
      \qquad \forall j \in \{1, 2, \dots, n\}
\end{equation} \\

\begin{proof}
    We classify job executions into two cases. \\
    
    \noindent \textbf{Case I:} There is at least one machine idle when job $p_j$ enters. 
    
    The job $p_j$ experiences no delay when it enters the stage, we have
    \begin{align}
        c_j = r_j + \frac{p_j}{s} & \leq T + \frac{1}{(ms)^*} \sum_{l=1}^{j-1} p_l + \frac{p_j}{s} \\
        &\leq T  + \frac{1}{(ms)^*} \sum_{l=1}^{j-1} p_l + \left( \frac{2m-1}{ms} \right) \cdot p_{\text{max}} \label{eq:relationship-of-idle}
    \end{align}
    
    Equation~\ref{eq:relationship-of-idle} comes from the fact that $\frac{p_j}{s} \leq \frac{p_{\max}}{s} \leq \frac{2m-1}{ms} \cdot p_{\max}$. \\

    \noindent \textbf{Case II:} All machines are occupied when job $p_j$ enters.

     The job $p_j$ needs to wait for previous jobs to complete before it enters a machine. We refer back to the last job $p_f$ before $p_j$ where $p_f$ enters the stage with at least one machine idle. Just before the job $p_f$ enters the network, all machines in the system have no job waiting in the queue. If there is any job in the queue, it could have chosen the idle machine at time $r_j$. Therefore, at time $r_f$, there must be no queue in all of the machines, and one of the machine must be idle. The job size in execution in the rest $m-1$ machines must be smaller than $p_{\max}$.  We can quantify the load of machine $\alpha$ at time $r_f$ as equation~\ref{eq:load-relation-occupied-pf}.
     
    \begin{equation}
        L_m \left( r_f \right) \leq s \cdot r_f + p_{\max}\label{eq:load-relation-occupied-pf}, \quad 
        \forall \alpha \in \{1, 2, \dots ,m \}
    \end{equation}
    
    Exactly one machine is idle
    \begin{equation}
      L_\alpha \left( r_f \right) \leq s \cdot r_f,  \quad\exists \alpha \in \{1, 2, \dots ,m \}
    \end{equation}

    The sum of the load on all machines right before the job $p_j$ enters satisfies the constraint 
    \begin{equation}
        \sum_{\alpha=1}^m  L_\alpha\left( r_f \right)   \leq m s \cdot r_f + \left( m-1 \right)\cdot p_{\max}
    \end{equation}

    After jobs $p_f$ to $p_{j-1}$ enter the network, the sum of the load on all machines right before the job $p_j$ enters satisfies the constraint
    \begin{equation}
        \sum_{\alpha=1}^m  L_\alpha \left( r_j \right)  \leq m s \cdot r_f + \left( m-1 \right)\cdot p_{\max} + \sum_{l=f}^{j-1} p_l 
    \end{equation}

    Job $p_j$ chooses the machine of the least load to execute its job. The least load of the machines must be less than the average load of the machines, which is $\frac{1}{m} \sum_{\alpha=1}^m  L_\alpha(r_j)$. Therefore, we can get the upper bound of the completion time.
    \begin{align}
        c_j &\leq \frac{1}{ms} \left( m s \cdot r_f + \left( m-1 \right)\cdot p_{\max} + \sum_{l=f}^{j-1} p_l  \right) + \frac{p_j}{s} \\
        &= r_f + \frac{m-1}{ms} \cdot p_{\max} + \frac{1}{ms} \sum_{l=f}^{j-1} p_l + \frac{p_j}{s} \label{eq:completion-time-plug-in-2}\\
        &\leq T + \frac{1}{(ms)^*} \sum_{l = 1}^{f-1} p_l + \frac{m-1}{ms} \cdot p_{\max} + \frac{1}{(ms)^*} \sum_{l=f}^{j-1} j_{r_{i}} + \frac{p_{\max}}{s} \label{eq:completion-time-plug-in-3}\\
        &\leq T + \left( \frac{2m-1}{ms} \right) \cdot p_{\max} + \frac{1}{(ms)^*} \sum_{l=1}^{j-1} p_l \label{eq:completion-time-plug-in-4}
    \end{align}
    
    From Eq~\ref{eq:completion-time-plug-in-2} to Eq~\ref{eq:completion-time-plug-in-3}, we use the bound on release time $r_f \leq T + \frac{1}{(ms)^*s} \sum_{l=1}^{f-1} p_l$ and the relation that $\frac{1}{ms} \leq \frac{1}{(ms)^*}$. From Eq~\ref{eq:completion-time-plug-in-3} to Eq~\ref{eq:completion-time-plug-in-4}, we merge the terms $\frac{1}{(ms)^*} \sum_{l=1}^{f-1} p_l$  and $\frac{1}{(ms)^*} \sum_{l=f}^{j-1} p_l$ and merge the terms with the factor $p_{\max}$.

    In both case I and case II, the completion time $c_j$ follows
    \begin{align}
        c_j \leq T + \left( \frac{2m-1}{ms} \right) \cdot p_{\max} + \frac{1}{(ms)^*} \sum_{l=1}^{j-1} p_l 
    \end{align}
\end{proof}
\end{lemma}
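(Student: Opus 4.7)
The plan is to split on whether any machine is idle at the instant $r_j$ when job $p_j$ enters the stage.

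\textbf{Case I (some machine is idle at $r_j$).} Then $p_j$ starts executing immediately, so $c_j = r_j + p_j/s$. Substituting the hypothesis $r_j \le T + \frac{1}{(ms)^*}\sum_{l=1}^{j-1} p_l$ and noting that $p_j/s \le p_{\max}/s \le \frac{2m-1}{ms}\, p_{\max}$ yields the desired bound directly.

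\textbf{Case II (all machines are occupied at $r_j$).} The idea is to walk back to the most recent job $p_f$ with $f\le j$ that did see an idle machine on arrival; such an $f$ exists because the very first job always does. The key auxiliary claim is that at time $r_f$ no machine has any \emph{queued} job (only the one currently in service, if any). Granting this, each of the $m-1$ non-idle machines has at most one in-progress job of size $\le p_{\max}$, while the idle machine $\alpha$ satisfies $L_\alpha(r_f)\le s\cdot r_f$; summing gives $\sum_\alpha L_\alpha(r_f)\le ms\cdot r_f + (m-1)p_{\max}$. Between $r_f$ and $r_j$ the only contributions to total load are the jobs $p_f,\dots,p_{j-1}$, adding $\sum_{l=f}^{j-1} p_l$. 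Since $p_j$ greedily picks the least-loaded machine, its start time is at most the average load $\frac{1}{ms}\sum_\alpha L_\alpha(r_j)$, so
\[
c_j \le r_f + \frac{m-1}{ms}\, p_{\max} + \frac{1}{ms}\sum_{l=f}^{j-1} p_l + \frac{p_j}{s}.
\]
Applying the release-time hypothesis to $r_f$, lifting $\frac{1}{ms}$ to $\frac{1}{(ms)^*}$, using $p_j\le p_{\max}$, and collapsing the two partial sums into $\sum_{l=1}^{j-1} p_l$ while combining the $p_{\max}$ coefficients $\frac{m-1}{ms}+\frac{1}{s} = \frac{2m-1}{ms}$ produces the claimed inequality.

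\textbf{Main obstacle.} The delicate step is the auxiliary claim that no machine carries a queued job at time $r_f$. I would prove it by contradiction using monotonicity of machine load in time together with the greedy rule. Suppose machine $\beta\ne\alpha$ has a job $p_q$ still queued (not yet started) at $r_f$; let $r_q\le r_f$ be $p_q$'s release time and $L_\beta(r_q^-)$ the load just before its assignment. Because $p_q$ has not begun executing by $r_f$, we must have $L_\beta(r_q^-) > s\cdot r_f$. Greediness at time $r_q$ forces $L_\alpha(r_q^-)\ge L_\beta(r_q^-) > s\cdot r_f$, and since loads only increase as further jobs arrive, $L_\alpha(r_f)\ge L_\alpha(r_q^-) > s\cdot r_f$, contradicting $\alpha$'s idleness at $r_f$. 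Once this claim is in hand, the rest of Case II is a careful but mechanical accounting, and the two cases together complete the proof.
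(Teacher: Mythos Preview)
Your proposal is correct and follows essentially the same route as the paper: the identical Case~I/Case~II split on whether a machine is idle at $r_j$, the same backtracking to the last job $p_f$ that saw an idle machine, the same ``no queued jobs at $r_f$'' claim, and the same averaging-plus-telescoping computation. Your treatment of the auxiliary claim is in fact more careful than the paper's one-line justification---your contradiction via greediness plus load monotonicity makes explicit what the paper leaves to the reader. One small phrasing point: when you say the jobs $p_f,\dots,p_{j-1}$ ``add $\sum_{l=f}^{j-1} p_l$'' to the total load, note that $p_f$ itself may increase the load on its (idle) machine by more than $p_f$; the bound still goes through because the estimate $\sum_\alpha L_\alpha(r_f)\le ms\,r_f + (m-1)p_{\max}$ already credits the idle machine with the full $s\,r_f$, so after $p_f$'s assignment the total is at most $ms\,r_f + (m-1)p_{\max} + p_f$, and from then on each $p_l$ with $f<l<j$ truly adds exactly $p_l$ since all machines are occupied.
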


Note that the jobs that arrive earlier will not always complete earlier. That is to say, $c_1 \leq c_2 \leq \dots \leq c_n$ is not always true. However, in order for the completion time relation in lemma~\ref{thm:Induction-kTmax+sum-Unordered time} to work, it must obey the relation $r_1 \leq r_2 \leq \dots \leq r_n$. As a result, we need to maintain the release time to be well ordered, and hence we will start by reordering the completion time in order. We consider a bijection $\sigma(j)$ for this particular stage where $c_{\sigma(j)}$ corresponds to the $\text{j}^\text{th}$ smallest completion time in $\{c_1, c_2, \dots, c_n\}$. If multiple completion times are identical, they can be arranged in any order. In multi-stage scheduling game, the permutation $\sigma_i(j)$ is written with a subscript notating its corresponding stage, but since we are only focusing on one single stage, the subscript $i$ is omitted. In lemma~\ref{thm:Reorder-Thm}, we consider the properties of the reordered completion time $c_{\sigma(j)}$.

\begin{lemma}\label{thm:Reorder-Thm}

If the job completion time $c_j$ of each task $p_j$ satisfies

\begin{equation} 
      c_j \leq T + \frac{1}{(ms)^*} \sum_{l = 1}^{j-1} p_l 
      \qquad \forall j \in \{1, 2, \dots, n\}
\end{equation} \\

Then, if we reorder the indices from $\{1, 2, \dots, n\}$ to $\{\sigma(1), \sigma(2), \dots, \sigma(n)\}$ so that job $c_{\sigma(j)}$ is the $\text{j}^\text{th}$ smallest job completion time, following $c_{\sigma(1)} \leq c_{\sigma^(2)} \leq \dots \leq c_{\sigma(n)}$. Then the tasks renumbered by indices $\{\sigma(1), \sigma(2), \dots, \sigma(n)\}$ should satisfy the new relationship
\begin{equation}
      c_{\sigma(j)} \leq T + \frac{1}{(ms)^*} \sum_{l = 1}^{j-1} p_{\sigma(l)}
      \qquad \forall j \in \{1, 2, \dots, n\}
\end{equation} \\
\end{lemma}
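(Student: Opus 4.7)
The plan is to bound $c_{\sigma(j)}$ not by invoking the original hypothesis on the index $\sigma(j)$ itself, but by invoking it on a carefully chosen larger index $N$ whose completion time dominates $c_{\sigma(j)}$. The obstacle with the naive approach is exactly the reason the lemma is nontrivial: the original bound applied at $\sigma(j)$ gives a sum over $\{1,\ldots,\sigma(j)-1\}$, and this set is in general neither a subset nor a superset of the target summation set $\{\sigma(1),\ldots,\sigma(j-1)\}$, so we cannot simply compare the two sums termwise.

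First, I would define $N$ to be the smallest positive integer that does not appear in the set $\{\sigma(1),\sigma(2),\ldots,\sigma(j-1)\}$. Since that set has $j-1$ elements, a one-line pigeonhole argument shows $N\le j\le n$, so $N$ is a legitimate job index. Because $\sigma$ is a bijection on $\{1,\ldots,n\}$ and $N$ is not among the first $j-1$ values of $\sigma$, there exists some $l^{*}\ge j$ with $\sigma(l^{*})=N$.

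Second, I would use the defining property of the reordering, namely $c_{\sigma(j)}\le c_{\sigma(l^{*})}=c_{N}$, and then apply the original hypothesis to the index $N$ to obtain
\begin{equation}
    c_{\sigma(j)}\le c_{N}\le T+\frac{1}{(ms)^{*}}\sum_{k=1}^{N-1}p_{k}.
\end{equation}
The key structural observation is that by the minimality in the definition of $N$, every integer $k<N$ must already lie in $\{\sigma(1),\ldots,\sigma(j-1)\}$; equivalently, $\{1,\ldots,N-1\}\subseteq\{\sigma(1),\ldots,\sigma(j-1)\}$. Since job sizes are nonnegative, this set inclusion immediately yields
\begin{equation}
    \sum_{k=1}^{N-1}p_{k}\;\le\;\sum_{l=1}^{j-1}p_{\sigma(l)},
\end{equation}
and chaining the two inequalities gives the claim.

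The main obstacle, as foreshadowed above, is identifying the right witness $N$: once one realizes that the correct move is to shift attention from $\sigma(j)$ to the \emph{smallest missing index} in the relabeled prefix, both the monotonicity $c_{\sigma(j)}\le c_{N}$ and the set containment $\{1,\ldots,N-1\}\subseteq\{\sigma(1),\ldots,\sigma(j-1)\}$ fall out essentially for free, and the rest of the argument reduces to routine bookkeeping. Notice the statement requires no assumption on the ordering of the original $r_{j}$'s for this reordering lemma; it is a purely combinatorial reshuffling statement about the completion-time bound, and will be combined with Lemma~\ref{thm:Induction-kTmax+sum-Unordered time} in subsequent sections to propagate the recurrence across stages.
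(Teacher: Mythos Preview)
Your proof is correct and follows essentially the same strategy as the paper: pick a witness index whose original-index prefix $\{1,\ldots,N-1\}$ is contained in the reordered prefix $\{\sigma(1),\ldots,\sigma(j-1)\}$, bound $c_{\sigma(j)}$ by $c_N$ via the monotonicity of the sorted completion times, and then compare sums using nonnegativity of the $p_l$. The paper's witness is $\beta(\sigma(j))=\min\{l:c_l\ge c_{\sigma(j)}\}$ while yours is the smallest integer missing from $\{\sigma(1),\ldots,\sigma(j-1)\}$; the two coincide whenever completion times are distinct, and your purely combinatorial choice mildly streamlines the argument by avoiding the paper's separate handling of the boundary case $\beta=1$.
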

\begin{proof}
    The sequence $c_{\sigma(j)}$ follows $c_{\sigma(1)} \leq c_{\sigma(2)} \leq \dots \leq c_{\sigma(n)}$. Consider the task $p_j$. Define $\beta (j)$ be the minimum index of tasks with equal or greater completion time than $c_j$.

    \begin{align}
        \beta (j) \equiv& \min l  \ s.t. \ c_l \geq c_j \label{eq:bet-j-def}\\ 
        =& \min l  \ s.t. \ \sigma^{-1}(l) \geq \sigma^{-1}(j) \label{eq:beta-j-sigma-j-relation}\\ 
        =&\textstyle \min_{x = \sigma^{-1}(j)}^{n} \sigma(x) \label{eq:bet-j-sigma-x-def}
    \end{align}

    From Eq~\ref{eq:bet-j-def} to Eq~\ref{eq:beta-j-sigma-j-relation}, we use the inverse relation $\sigma^{-1}(l)$ where the job $p_j$ is the job with the $\sigma^{-1}(l)^\text{th}$ smallest completion time. For job $p_l$ to finish not later than $p_j$, its corresponding index $\sigma^{-1}(l)$ in the ordered completion time must not be greater than $\sigma^{-1}(j)$. From Eq~\ref{eq:beta-j-sigma-j-relation} to Eq~\ref{eq:bet-j-sigma-x-def}, we use the fact that $\sigma^{-1}(l)$ is an integer between $\{1, 2, \dots, n\}$, and use the substitution $l=\sigma(x)$. \\

    In other words,
    \begin{align}
        \forall l \leq \left( \beta(j) - 1\right) & \qquad c_l < c_j, \ \text{and thus} \ \sigma^{-1}(l) < \sigma^{-1}(j) \label{eq:compare-sigma-l-sigma-j} 
    \end{align}

    If $\beta(j)= 1$, in other words, the completion time $c_1$ is equal or greater than $c_j$, then 
    \begin{align}
        c_j \leq c_1 \leq T
    \end{align}
    \begin{align}
        c_j = c_{\sigma(\sigma^{-1}(j))}  \leq T \leq T + \frac{1}{(ms)^*} \sum_{l = 1}^{\sigma^{-1}(j)-1} p_{\sigma(l)} \label{eq:beta(j)-1-c_j}
    \end{align}

    If $\beta(j) \geq 2$, which is to say the completion time $c_{\beta(j)}$ is equal to or greater than $c_j$, then 
    \begin{align}
        c_j = c_{\sigma(\sigma^{-1}(j))}\ \leq c_{\beta(j)} \ \leq T + \frac{1}{(ms)^*} \sum_{l=1}^{\beta(j)-1} p_l \label{eq:c_j_p_j_beta_j_bound}
    \end{align}
    
    The set $A = \{1, 2, \dots, \beta(j)-1\}$ is a subset of the set $B = \{\sigma(1), \sigma(2), \dots, \sigma\left(\sigma^{-1}(j)-1 \right)\}$. This is because for any element $a \in A$,
    \begin{align}
        & a \leq \left( \beta(j)-1 \right) \label{eq:a<beta(j)} \\
        \Rightarrow{} & \  \sigma^{-1}(a) < \sigma^{-1}(j) \label{eq:sigma(a)<sigma(j)} \\
        \Rightarrow{} & \  a \in \{ \sigma(1), \sigma(2), \dots, \sigma(\sigma^{-1}(j) - 1) \} \label{eq:a-range}\\
        \Rightarrow{} & \ a  \in B \label{eq:a-in-B}
    \end{align}

    From Eq~\ref{eq:a<beta(j)} to Eq~\ref{eq:sigma(a)<sigma(j)}, we used the property in Eq~\ref{eq:compare-sigma-l-sigma-j}. From Eq~\ref{eq:sigma(a)<sigma(j)} to Eq~\ref{eq:a-range}, we use the fact that $\sigma^{-1}(a)$ is an integer from $1$ to $\left( \sigma^{-1}(j)-1 \right)$, and the relation $a=\sigma\left(\sigma^{-1}(a)\right)$.

    Thus, $A$ is a subset of $B$, so $\sum_{l=1}^{\beta(j)-1} p_l \leq \sum_{l=1}^{\sigma^{-1}(j)-1} p_{\sigma(l)}$. By Eq~\ref{eq:c_j_p_j_beta_j_bound}, we have
    
    \begin{align}
        c_j= c_{\sigma^{-1}(\sigma(j))} \ &\leq T + \frac{1}{(ms)^*} \sum_{l=1}^{\beta(j)-1}  p_l \\ \ &\leq T + \frac{1}{(ms)^*} \sum_{l=1}^{\sigma^{-1}(j)-1} p_{\sigma(l)} \label{eq:case-beta(j)>=2}
    \end{align}

    In both cases (Eq~\ref{eq:beta(j)-1-c_j} and Eq~\ref{eq:case-beta(j)>=2}), the result is
    \begin{align}
        c_{\sigma(\sigma^{-1}(j))} \ \leq T + \frac{1}{(ms)^*} \sum_{l=1}^{\sigma^{-1}(j)-1} p_{\sigma(l)} \label{eq:c-sigmainv(sigma)}
    \end{align}

    Let $r=\sigma^{-1}(j)$, we plug it into Eq~\ref{eq:c-sigmainv(sigma)},
    \begin{align}
        c_{\sigma(r)} \ \leq T + \frac{1}{(ms)^*} \sum_{l=1}^{r-1} p_{\sigma(l)} \label{eq:c-sigma(sigma)-final}
    \end{align}
    
    
    Therefore, the sequence $c_{\sigma^{-1}(j)}$ after ordering from small to large satisfies 
    \begin{equation}
      c_{\sigma(j)} \leq T + \frac{1}{(ms)^*} \sum_{l = 1}^{j-1} p_{\sigma(l)}   
      \qquad \forall j \in \{1, 2, \dots, n\}
    \end{equation} 
\end{proof}

Following on, we will derive the relation between the release time and the ordered completion time in theorem~\ref{thm:completion-time-one-stage}.

\begin{theorem}
\label{thm:completion-time-one-stage}
    Given a stage with $m$ machines with speed $s$, and a positive number $(ms)^* \leq ms$, a total of $n$ jobs are passed into this stage with various release times $r_j$ ordered from first to last, if the job release time $r_j$ satisfies
    \begin{equation}
          r_j \leq T + \frac{1}{(ms)^*} \sum_{l = 1}^{j-1} p_l 
          \qquad \forall j \in \{1, 2, \dots, n\}
    \end{equation}
    
    Then, for the completion time $c_j$ we construct a transformation $\sigma(j)$ such that $c_{\sigma(1)} \leq c_{\sigma(2)} \leq \dots \leq c_{\sigma(n)}$. The completion time $c_{\sigma(j)}$ satisfies
    
    \begin{equation}
          p_{\sigma(j)} \leq T + \left( \frac{2m-1}{ms} \right) \cdot p_{\text{max}} + \frac{1}{(ms)^*} \sum_{l = 1}^{n-1} p_{\sigma(l)}     \qquad \forall j \in \{1, 2, \dots, n\}
    \end{equation} \\
\end{theorem}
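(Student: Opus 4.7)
The statement is essentially a corollary obtained by composing Lemma~\ref{thm:Induction-kTmax+sum-Unordered time} and Lemma~\ref{thm:Reorder-Thm}, so my plan is to apply each in turn with the right substitution of constants, and then weaken the resulting per-index bound into the global bound that appears on the right-hand side.

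First, I invoke Lemma~\ref{thm:Induction-kTmax+sum-Unordered time} verbatim on the given release-time hypothesis. This immediately yields, for every $j \in \{1,\dots,n\}$,
\begin{equation*}
    c_j \;\leq\; T + \left(\frac{2m-1}{ms}\right) p_{\max} + \frac{1}{(ms)^*}\sum_{l=1}^{j-1} p_l.
\end{equation*}
I then absorb the $p_{\max}$ term into a new constant by setting
\begin{equation*}
    T' \;:=\; T + \left(\frac{2m-1}{ms}\right) p_{\max},
\end{equation*}
so the bound reads $c_j \le T' + \frac{1}{(ms)^*}\sum_{l=1}^{j-1} p_l$, which is precisely the hypothesis of Lemma~\ref{thm:Reorder-Thm} (with $T$ replaced by $T'$).

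Next I apply Lemma~\ref{thm:Reorder-Thm} to the sequence $\{c_j\}$ and the permutation $\sigma$ that arranges the completion times in nondecreasing order. The lemma gives
\begin{equation*}
    c_{\sigma(j)} \;\leq\; T' + \frac{1}{(ms)^*}\sum_{l=1}^{j-1} p_{\sigma(l)}
    \qquad \forall j \in \{1,\dots,n\}.
\end{equation*}
Finally, since $j \leq n$ implies $j-1 \leq n-1$ and each $p_{\sigma(l)} \geq 0$, I trivially weaken the partial sum by extending the summation up to $n-1$:
\begin{equation*}
    \sum_{l=1}^{j-1} p_{\sigma(l)} \;\leq\; \sum_{l=1}^{n-1} p_{\sigma(l)}.
\end{equation*}
Substituting this and unpacking $T'$ recovers exactly the claimed inequality.

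There is no real obstacle here beyond bookkeeping; the only thing worth being careful about is the substitution $T \mapsto T'$ when feeding the output of Lemma~\ref{thm:Induction-kTmax+sum-Unordered time} into Lemma~\ref{thm:Reorder-Thm}, and checking that $(ms)^*$ plays the same role in both lemmas so that no additional hypothesis is needed. The final step of replacing $j-1$ by $n-1$ in the upper limit of the sum is a deliberate weakening that produces a single, index-independent right-hand side, which is what makes the result convenient for later inductive use across stages.
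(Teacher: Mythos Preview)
Your proposal is correct and follows essentially the same approach as the paper: apply Lemma~\ref{thm:Induction-kTmax+sum-Unordered time}, then feed its output into Lemma~\ref{thm:Reorder-Thm} with the substitution $T \mapsto T + \frac{2m-1}{ms}\,p_{\max}$. The only difference is that the paper's proof actually stops with the sharper bound $\sum_{l=1}^{j-1} p_{\sigma(l)}$ on the right-hand side (matching what is later used in Theorem~\ref{thm:PoA-multi-stage-upper-bound}), whereas you add the trivial weakening $\sum_{l=1}^{j-1} \le \sum_{l=1}^{n-1}$ to match the theorem statement literally; this extra step is harmless and, if anything, makes your write-up more faithful to the displayed claim.
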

\begin{proof}
    Under the release time criterion  
    \begin{equation}
          r_j \leq T + \frac{1}{(ms)^*} \sum_{l = 1}^{j-1} p_l 
          \qquad \forall j \in \{1, 2, \dots, n\}
    \end{equation}
    
    Lemma~\ref{thm:Induction-kTmax+sum-Unordered time} gives us the relation of the job completion time before reordering

    \begin{equation}
        c_j \leq T + \left( \frac{2m-1}{ms} \right) \cdot p_{\max} + \frac{1}{(ms)^*} \sum_{l = 1}^{j-1} p_l  
        \qquad \forall j \in \{1, 2, \dots, n\}
    \end{equation} \\

    We reorder the index so that it follows $c_{\sigma(1)} \leq c_{\sigma(2)} \leq \dots \leq c_{\sigma(n)}$. By using lemma~\ref{thm:Reorder-Thm} to derive the relation of completion times after reordering the completion times from small to large, and substituting the $T$ in lemma~\ref{thm:Reorder-Thm} by $T + \frac{2m-1}{ms} \cdot p_{\max}$, we obtain
    
    \begin{equation}
        c_{\sigma(j)} \leq T + \left( \frac{2m-1}{ms} \right) \cdot p_{\max} + \frac{1}{(ms)^*} \sum_{l = 1}^{j-1} p_{\sigma(l)}    
        \qquad \forall j \in \{1, 2, \dots, n\}
    \end{equation} \\
\end{proof}

We have shown than the bound of the completion time increases at most by $\left(\frac{2m-1}{ms}\right) p_{\max}$ compared to the release time when the jobs go through one single stage. Following on, we will apply our results to the sequential machine scheduling game. In a multi-stage machine scheduling game, each job is ordered based on their release time, then gets assigned to the least loaded machine each stage. A job is released in the next stage whenever it reaches completion in the previous stage. We will prove upper bounds and lower bounds for single-stage and multi-stage machine scheduling games.

First, we will consider the single-stage machine scheduling game with $m$ machines and $s$ speed. The worst case price of anarchy of a single stage network is exactly $2-\frac{1}{m}$, and we will prove it by separately proving its lower bound (theorem~\ref{thm:PoA-single-stage-lower-bound}) and its upper bound (theorem~\ref{thm:PoA-single-stage-upper-bound}). 
Then, we will consider a multi-stage machine scheduling game with $k$ stages. Stage $i$ consists of $m_i$ machines with speed $s_i$. The network is illustrated in figure~\ref{fig:Identical-machine-scheduling-per-stage-identical}. We will provide a lower bound (theorem~\ref{thm:PoA-multi-stage-lower-bound}) and an upper bound (theorem~\ref{thm:PoA-multi-stage-upper-bound}) for the price of anarchy of multi-stage machine scheduling.

\begin{figure} [h]
    \centering
    \begin{tikzpicture}[
        roundnode/.style={circle, draw=green!60, fill=green!5, very thick, minimum size=7mm},
        squarednode/.style={rectangle, draw=red!60, fill=red!5, very thick, minimum size=5mm},
    ]
        \node [roundnode] (source) {$P_0$};
        \node [roundnode] (midpoint1) at (5,0) {$P_1$};
        \node [roundnode] (midpoint2) at (10,0) {$P_2$};
         \node  at (11, 0) {$\dots$};
        \node [roundnode] (sink) at (12,0) {$P_K$};
        \draw (source) .. controls (,3) and (4,3) .. (midpoint1);
        \node at (2.5, 3.05) {Stage $1$ Machine $1$};
        \node at (2.5, 2.6) {Speed $s_1$};

        \draw (source) .. controls (1,1.0) and (4,1.0) .. 
        (midpoint1);
        \node  at (2.5, 1.45) {Stage $1$ Machine $2$};
        \node at (2.5, 1.05) {Speed $s_1$};

         \node  at (2.5, 0) {$\vdots$};

        \draw (source) .. controls (1,-2.55) and (4,-2.55) .. (midpoint1);
        \node  at (2.5, -1.05) {Stage $1$ Machine $m_1$};
        \node at (2.5, -1.45) {Speed $s_1$};

        \draw (midpoint1) .. controls (6,2) and (9,2) .. (midpoint2);
        \node at (7.5, 2.4) {Stage $2$ Machine $1$};
        \node at (7.5, 2) {Speed $s_2$};

         \node  at (7.5, 0.5) {$\vdots$};

        \draw (midpoint1) .. controls (6,-2.25) and (9,-2.25) .. 
        (midpoint2);
        \node  at (7.5, -0.75) {Stage $2$ Machine $m_2$};
        \node at (7.5, -1.15) {Speed $s_2$};
        
    \end{tikzpicture}
    \caption{Illustration of a network of Multi-Stage Identical Machine Scheduling. The network consists of $k$ stage. Each stage is a parallel of $m_i$ machines with identical speed $s_i$.}
    \label{fig:Identical-machine-scheduling-per-stage-identical}
\end{figure}
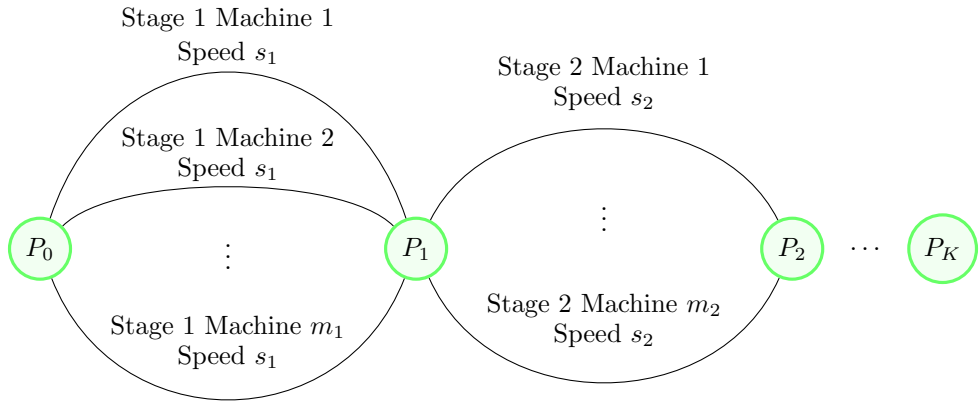

Now, let's first consider the lower bound of the price of anarchy of single stage machine scheduling.

\begin{theorem} \label{thm:PoA-single-stage-lower-bound}
 Consider a single-stage machine scheduling game with $m$ machines of speed $s$. There are $n$ jobs waiting to be processed, each with size $p_j$ and released at time $t = 0$ at the source node. Each job requires processing at every stage, but can choose either of the machines to execute. The worst case makespan price of anarchy is at least $2 - \frac{1}{m}$.
\begin{equation}
    \text{PoA} \geq 2 - \frac{1}{m}
\end{equation}
\end{theorem}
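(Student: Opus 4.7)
The plan is to establish the lower bound by exhibiting a single instance whose greedy makespan, divided by the optimal makespan, approaches $2 - \frac{1}{m}$. This is the standard construction for identical-machine scheduling lower bounds, adapted to the greedy-arrival model of this paper where all jobs are released simultaneously at $t=0$ and an adversarial tie-breaking order is permitted by the model.

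The instance I would use has $n = m(m-1) + 1$ jobs sharing speed $s = 1$ (without loss of generality). The first $m(m-1)$ jobs are unit jobs of size $p_j = 1$, and the last job has size $p_n = m$. I would order the unit jobs so that greedy assigns them in a round-robin fashion across the $m$ machines: machine $1$ gets jobs $1, m+1, 2m+1, \dots$, machine $2$ gets jobs $2, m+2, \dots$, and so on. Since all unit jobs are identical and the load is always tied across machines at each round, this round-robin assignment is a valid greedy (least-load) outcome under the tie-breaking latitude the model allows. After the $m(m-1)$ unit jobs are placed, every machine carries load exactly $m-1$. The final large job of size $m$ then arrives and greedy places it on some machine (all are tied), yielding a completion time of $(m-1) + m = 2m-1$ for that job, hence a greedy makespan of $2m-1$.

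For the optimal schedule, I would place the large job alone on one designated machine (finishing at time $m$) and spread the $m(m-1)$ unit jobs uniformly over the remaining $m-1$ machines, giving each of those $m$ unit jobs and a load of $m$. The resulting optimal makespan is exactly $m$, and since the total work is $m^2$ on $m$ machines, this is clearly optimal. Taking the ratio gives
\begin{equation}
    \mathrm{PoA} \;\geq\; \frac{2m-1}{m} \;=\; 2 - \frac{1}{m}.
\end{equation}

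There is no real obstacle beyond verifying that the adversarial arrival order is consistent with the model's definition of greedy choice: specifically, that when every machine has the same load, a greedy job may be routed to any of them, and the paper explicitly permits arbitrary tie-breaking at $t = 0$. Once that is noted, the construction is self-contained and the two makespan computations are direct.
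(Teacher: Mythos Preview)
Your proposal is correct and matches the paper's own proof essentially line for line: the same instance of $m(m-1)$ unit jobs followed by one job of size $m$, the same greedy outcome with makespan $2m-1$, and the same optimal schedule with makespan $m$. The only differences are cosmetic (you fix $s=1$ and add a brief justification that the optimal is tight via total work $m^2$), so there is nothing further to compare.
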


\begin{proof}
    An example of the lower bound  price of anarchy constructed by the following job distribution. The jobs to execute are $m \cdot \left(m -1 \right)$ jobs of size $1$ and $1$ job of size $m$. 
    
    In the optimal scheduling, $m \cdot \left(m -1 \right)$ jobs of size $1$ are equally distributed to the first $m - 1$ machines, and the job of size $m$ is executed on the last machine. Each machine has a load of $m$, and the optimal makespan is $\frac{m}{s}$. 
    
    In the worst case Nash equilibrium, the $m \cdot \left(m -1 \right)$ jobs of size $1$ evenly distribute between the $m$ machines first, adding a load of $m - 1$ to each individual machine. Then, the the job of size $m$ chooses a machine, causing the final makespan to be $\frac{2m-1}{s}$. In this example, the price of anarchy is 
    \begin{align}
        \text{PoA in example} = \frac{\frac{2m-1}{s}}{\frac{m}{s}} = 2 - \frac{1}{m}
    \end{align}

    The worst cast price of anarchy is at least the price of anarchy of this example, so we have
    \begin{align}
        \text{PoA} \geq 2 - \frac{1}{m}
    \end{align}
\end{proof}

We have proven that the lower bound of the price of anarchy of single-stage scheduling is at least $2-\frac{1}{m}$, established through the single stage network. Next, we will provide an upper bound of the price of anarchy in theorem~\ref{thm:PoA-single-stage-upper-bound}. Our technique essentially comes from the worst case performance of the list scheduling procedure proposed by Graham\cite{GrahamR.L.1966Bfcm}. We will show that our upper bound of price of anarchy is equivalent to the worst case performance guarantee of list scheduling in the proof of theorem~\ref{thm:PoA-single-stage-upper-bound}.

\begin{theorem} \label{thm:PoA-single-stage-upper-bound}
 Consider a single-stage machine scheduling game with $m$ machines of speed $s$. There are $n$ jobs waiting to be processed, each with size $p_j$ and released at time $t = 0$ at the source node. Each job requires processing at every stage, but can choose either of the machines to execute. The worst case makespan price of anarchy is at most $2 - \frac{1}m$. 
\begin{equation}
    \text{PoA} \leq 2 - \frac{1}{m}
\end{equation}
\end{theorem}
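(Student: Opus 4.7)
The plan is to use the classical Graham list-scheduling argument adapted to the greedy setting. Since all jobs are released at $t=0$, greedy play is equivalent to a list-scheduling procedure: the jobs are processed one by one in the tiebreak order, and each is placed on the machine with the currently smallest load.

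First, I would identify the job $j^\dagger$ whose completion time equals the makespan $C$, and let $\tau$ denote the time at which it begins execution on its chosen machine $\alpha^\dagger$, so $C = \tau + p_{j^\dagger}/s$. By the greedy rule, machine $\alpha^\dagger$ has the minimum load among the $m$ machines when $j^\dagger$ is assigned, hence $s\tau$ is at most the average load across all $m$ machines at that instant. Because each previously assigned job contributes its size to exactly one machine's load, the sum of loads just before $j^\dagger$ is assigned is bounded by $P - p_{j^\dagger}$, where $P = \sum_{l=1}^{n} p_l$ is the total work. This yields
\begin{equation}
C \;=\; \tau + \frac{p_{j^\dagger}}{s} \;\leq\; \frac{P - p_{j^\dagger}}{ms} + \frac{p_{j^\dagger}}{s} \;=\; \frac{P}{ms} + \frac{m-1}{m}\cdot\frac{p_{j^\dagger}}{s}.
\end{equation}

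Finally I would invoke the two standard lower bounds on the optimum makespan $\mathrm{OPT}$: total-work accounting gives $\mathrm{OPT} \geq P/(ms)$, and the fact that every job must run in full on some machine gives $\mathrm{OPT} \geq p_{\max}/s \geq p_{j^\dagger}/s$. Substituting both into the inequality above yields $C \leq \mathrm{OPT} + \frac{m-1}{m}\mathrm{OPT} = (2 - \tfrac{1}{m})\mathrm{OPT}$, as desired. The only point that demands care, rather than a genuine obstacle, is the bookkeeping for the total-load bound: one must observe that since every previously assigned job sits on exactly one machine, the sum of machine loads at the moment $j^\dagger$ is placed is exactly the total size of jobs preceding $j^\dagger$ in the tiebreak order, which is at most $P - p_{j^\dagger}$. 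This step is clean in the single-stage setting precisely because all release times are zero, so the complications from varying arrival times and FCFS queueing that motivate Lemma~\ref{thm:Induction-kTmax+sum-Unordered time} and Theorem~\ref{thm:completion-time-one-stage} do not intrude; in fact applying those results verbatim here would give only the looser bound $3 - \tfrac{1}{m}$, so the direct Graham-style argument is what is needed to reach the tight constant.
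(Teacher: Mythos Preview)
Your proposal is correct and follows essentially the same approach as the paper: both reduce the single-stage greedy game to Graham's list-scheduling bound. The paper argues that greedy assignment in the given tiebreak order is identical to list scheduling and then cites Graham's $2-\tfrac{1}{m}$ guarantee directly, whereas you unpack that guarantee by hand (makespan job $j^\dagger$, average-load bound on its start time, and the two standard lower bounds on $\mathrm{OPT}$); your added remark that invoking Lemma~\ref{thm:Induction-kTmax+sum-Unordered time} or Theorem~\ref{thm:completion-time-one-stage} verbatim would only yield $3-\tfrac{1}{m}$ is accurate and a helpful clarification.
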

\begin{proof}
    Graham's list scheduling algorithm adopts a similar setting of scheduling $n$ tasks on $m$ identical parallel machines. The goal is to find the scheduling which minimizes the makespan $c_{\max}$. In Graham's list scheduling algorithm, jobs are ordered in a preference list and whenever a machine becomes idle, the first uncompleted job from the list gets assigned to the machine. They proved that under any given order of job preference list, this list scheduling algorithm has a worst-case performance guarantee of $2-\frac{1}{m}$ compared to the optimal solution.

    In our multi-stage machine scheduling game, instead of machines choosing jobs from a preference list, jobs are assigned to machine based on a given order. If we use the given job order as the preference list in list scheduling, the resulting assignment would be identical. Jobs being assigned to least loaded machine in the given order would be equivalent to idle machines always choosing the next job on the list. Thus, the worst-case guarantee of $2-\frac{1}{m}$ in list scheduling corresponds to the upper bound of the price of anarchy of $2-\frac{1}{m}$ in single-stage greedy machine scheduling.
\end{proof}

From thm~\ref{thm:PoA-single-stage-lower-bound} and thm~\ref{thm:PoA-single-stage-upper-bound}, we achieved a matching lower bound and upper bound of $2-\frac{1}{m}$ of the price of anarchy of single stage machine scheduling. Continuing on, we will analyze the multi-stage scheduling system, and achieve an upper bound on multi-stage scheduling games.

\begin{theorem} \label{thm:PoA-multi-stage-lower-bound}
 A multi-stage scheduling game consists of $k$ stages, with $m_i$ machines of speed $s_i$ in stage $i$. There are $n$ jobs waiting to be processed, each with size $p_j$ and released at time $t = 0$ at the source node. Each job requires processing at every stage, but can choose either of the machines in a stage to execute. The worst case makespan price of anarchy is at least $2 - \frac{1}{m_{\max}}$, where $m_{\max}$ is the maximum count of machines in a single stage.
 \begin{equation}
    \text{PoA} \geq 2 - \frac{1}{m_{\max}}
\end{equation}
\end{theorem}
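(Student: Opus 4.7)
The plan is to lift the single-stage lower bound construction from Theorem~\ref{thm:PoA-single-stage-lower-bound} to the multi-stage setting by concentrating the bottleneck in one stage of $m_{\max}$ machines and making all other stages essentially costless. I would take the first stage to have $m_{\max}$ machines of unit speed, and stages $2, \ldots, k$ to have any allowed configuration (at most $m_{\max}$ machines each) with speeds $s_i$ set to a large parameter $S$ that will be driven to infinity. The job set is the single-stage adversarial instance: $m_{\max}(m_{\max}-1)$ unit jobs and one job of size $m_{\max}$, all released at $t=0$, with tie-breaking that orders the unit jobs before the big job.

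At stage $1$, the argument of Theorem~\ref{thm:PoA-single-stage-lower-bound} applies verbatim: under greedy, the unit jobs spread across the $m_{\max}$ machines, giving each load $m_{\max}-1$, and then the big job raises one machine's load to $2m_{\max}-1$, while the optimal schedule packs the unit jobs onto $m_{\max}-1$ machines and isolates the big job on the last, giving stage-$1$ optimum $m_{\max}$. For stages $2$ through $k$, the total work at each stage is $m_{\max}^2/S$, which uniformly bounds any waiting plus execution time on any machine there. A short induction on the stage index then shows that completion times at stage $i$ lie within $O((i-1) m_{\max}^2 / S)$ of the stage-$1$ completion times, under either schedule. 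Hence the greedy makespan equals $(2m_{\max}-1) + O(1/S)$ and the optimal makespan equals $m_{\max} + O(1/S)$, whose ratio tends to $(2m_{\max}-1)/m_{\max} = 2 - 1/m_{\max}$ as $S \to \infty$, which establishes the claimed lower bound on the price of anarchy.

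The main obstacle I expect is the careful bookkeeping for the high-speed stages: one must verify that neither the greedy scheduler nor the optimal scheduler can use those stages to move the makespan outside the $O(1/S)$ window, in particular that the optimal schedule cannot exploit the extra stages to shave off a nonvanishing fraction of the stage-$1$ overhead. Both facts reduce to the blanket bound that the total work at any stage $i \geq 2$ is at most $m_{\max}^2/S$, but making the uniform statement rigorous across the greedy and optimal schedules requires the small induction mentioned above.
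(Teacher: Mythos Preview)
Your approach is essentially the paper's: concentrate the bottleneck at one stage with $m_{\max}$ machines, reuse the single-stage adversarial instance from Theorem~\ref{thm:PoA-single-stage-lower-bound}, and make every other stage negligible. The only differences are cosmetic: the paper short-circuits your limit argument by setting the non-bottleneck speeds directly to $\infty$ (so those stages contribute zero time and no $O(1/S)$ bookkeeping is needed), and it places the bottleneck at whichever stage $i^*$ already satisfies $m_{i^*}=m_{\max}$ rather than assuming it is stage $1$.
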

\begin{proof}
    Let the stage with $m=m_{\max}$ be the stage $i^*$. We construct an example where the speed of the machines in each stage is assigned as follows.
    \begin{equation}
        s_i = \begin{cases}
            1 \ \ \quad \text{if} \  s = s^* \\
            \infty \quad \text{if} \  s \neq s^* \\
        \end{cases}
    \end{equation}
\end{proof}

The job distribution is $m_{\max} \cdot (m_{\max} -1)$ jobs of size 1 and a job of size $m_{\max}$. Since all stages except stage $i^*$ has infinity processing speed and zero processing time, processing time in stage $i^*$ contributes to all of the processing time.  The processing time of this network is equivalent to the processing time of a single stage network with $m=m_{\max}$. Hence, by theorem~\ref{thm:PoA-single-stage-lower-bound}, the lower bound price of anarchy of this example is $2-\frac{1}{m_{\max}}$. This example also gives a lower bound on the price of anarchy of multistage machine scheduling.
\begin{align}
    \text{PoA} \geq 2-\frac{1}{m_{\max}}
\end{align}

After we give a lower bound of price of anarchy in theorem~\ref{thm:PoA-multi-stage-lower-bound}, we will provide a proof on the upper bound of price of anarchy in multi-stage machine scheduling in theorem~\ref{thm:PoA-multi-stage-upper-bound}.

\begin{theorem} \label{thm:PoA-multi-stage-upper-bound}
 A multi-stage scheduling game consists of $k$ stages, with $m_i$ machines of speed $s_i$ in stage $i$. There are $n$ jobs waiting to be processed, each with size $p_j$ and released at time $t = 0$ at the source node. Each job requires processing at every stage, but can choose either of the machines in a stage to execute. The worst case makespan price of anarchy is at most $3 - \frac{1}{m_{\max}}$, where $m_{\max}$ is the maximum count of machines in a single stage.
\begin{equation}
    \text{PoA} \leq 3 - \frac{1}{m_{\max}}
\end{equation}
\end{theorem}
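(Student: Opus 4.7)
The plan is to unroll Theorem~\ref{thm:completion-time-one-stage} across all $k$ stages using a single uniform denominator. Set $(ms)^* := \min_{1 \le i \le k} m_i s_i$, which is admissible in every invocation of Theorem~\ref{thm:completion-time-one-stage} since $(ms)^* \le m_i s_i$ at every stage. The inductive invariant is: after processing through stage $i$ and relabeling the jobs in nondecreasing order of their stage-$i$ completion times,
\begin{equation}
    c^i_j \le T^{(i)} + \frac{1}{(ms)^*} \sum_{l=1}^{j-1} p_l,
    \qquad T^{(i)} := \sum_{h=1}^{i} \frac{2m_h - 1}{m_h s_h}\, p_{\max}.
\end{equation}
The base case $i=1$ is immediate from Theorem~\ref{thm:completion-time-one-stage} with $T=0$, since all jobs are released at time $0$ at the source. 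For the inductive step, the stage-$i$ completion times (after relabeling) are exactly the stage-$(i+1)$ release times; the hypothesis supplies precisely the release-time condition required to apply Theorem~\ref{thm:completion-time-one-stage} at stage $i+1$, and the theorem's output, once we absorb the new permutation into a relabeling, is the same invariant with $T^{(i+1)} = T^{(i)} + \frac{2m_{i+1}-1}{m_{i+1}s_{i+1}}\,p_{\max}$.

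Specializing the invariant at $i=k$, $j=n$ and extending the partial sum to all jobs bounds the greedy makespan:
\begin{equation}
    \max_j c^k_j \le \sum_{i=1}^{k} \frac{2m_i - 1}{m_i s_i}\, p_{\max} + \frac{1}{(ms)^*} \sum_{l=1}^{n} p_l.
\end{equation}
I will then compare this against two lower bounds on $\mathrm{OPT}$. First, the largest job must traverse every stage sequentially, giving $\mathrm{OPT} \ge \sum_{i=1}^k p_{\max}/s_i$. Second, within any stage $i$ all the work $\sum_l p_l$ is processed by $m_i$ machines of speed $s_i$, so $\mathrm{OPT} \ge \sum_l p_l / (m_i s_i)$ for every $i$, and hence $\mathrm{OPT} \ge \tfrac{1}{(ms)^*}\sum_l p_l$.

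Using $\frac{2m_i-1}{m_i s_i} = \bigl(2 - \tfrac{1}{m_i}\bigr)\tfrac{1}{s_i}$ and $m_i \le m_{\max}$,
\begin{align}
    \sum_{i=1}^{k} \frac{2m_i - 1}{m_i s_i}\, p_{\max}
      \le \Bigl(2 - \tfrac{1}{m_{\max}}\Bigr) \sum_{i=1}^{k} \frac{p_{\max}}{s_i}
      \le \Bigl(2 - \tfrac{1}{m_{\max}}\Bigr)\mathrm{OPT}.
\end{align}
Adding the work bound $\tfrac{1}{(ms)^*}\sum_l p_l \le \mathrm{OPT}$ gives makespan $\le \bigl(3 - \tfrac{1}{m_{\max}}\bigr)\mathrm{OPT}$, which is the desired price-of-anarchy upper bound.

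The main obstacle is keeping the relabeling bookkeeping honest: each invocation of Theorem~\ref{thm:completion-time-one-stage} introduces a permutation $\sigma_i$ that reorders jobs by stage-$i$ completion time, and one has to verify that the output ordering at stage $i$ coincides with the input ordering assumed at stage $i+1$. This is painless because the invariant only needs to be carried forward at $j=n$ to control the makespan, and the total mass $\sum_{l=1}^n p_l$ is permutation-invariant; committing to the uniform denominator $(ms)^* = \min_i m_i s_i$ at the outset then lets the same inequality thread through every stage without modification.
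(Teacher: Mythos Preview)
Your proposal is correct and follows essentially the same route as the paper: set $(ms)^* = \min_i m_i s_i$, iterate Theorem~\ref{thm:completion-time-one-stage} across stages so that the constant $T$ accumulates $\sum_i \frac{2m_i-1}{m_i s_i}\,p_{\max}$, and finish by comparing against the two optimal lower bounds $\sum_i p_{\max}/s_i$ and $\frac{1}{(ms)^*}\sum_l p_l$. One small point of wording: in your closing paragraph you say the invariant ``only needs to be carried forward at $j=n$,'' but in fact the full invariant (for every $j$) must be maintained at each intermediate stage, since Theorem~\ref{thm:completion-time-one-stage} requires the release-time hypothesis for all $j$; the reason the bookkeeping is painless is the one you state just before, namely that the stage-$i$ completion ordering \emph{is} the stage-$(i{+}1)$ release ordering, so the relabeling is automatic.
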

\begin{proof}
    Recall that notation $p^{i}_{j}$ denotes the job with the $j^\text{th}$ earliest release time in stage $i$, and $r^{i}_{j}$, $c^{i}_{j}$ denotes the release time and completion time of the job $p^{i}_{j}$. Let $\left(ms\right)^* = \min_{i=1}^k {m_i s_i}$. In the first stage, the release time follows the criterion 
    
    \begin{equation}
        r^{1}_{j} = 0 \leq \frac{1}{\left(ms\right)^*} \sum_{l = 1}^{j-1} p^{1}_{l} \qquad \forall j \in \{1, 2, \dots, n\}    
    \end{equation} \\

    Let $\sigma_i(j)$ denote the bijection function which reorders $c^i_j$ into the relation $c^i_{\sigma_i(1)} \leq c^i_{\sigma_i(2)} \leq \cdots \leq c^i_{\sigma_i(n)}$. By theorem~\ref{thm:completion-time-one-stage} and plugging $T$ as $0$, the completion time of stage 1 satisfies
    \begin{equation}
        c^{1}_{\sigma_1(j)} \leq  \left( \frac{2m_1-1}{m_1 s_1} \right) \cdot p_{\text{max}} + \frac{1}{\left(ms\right)^*} \sum_{l = 1}^{j-1} p^{1}_{\sigma_1(l)}  \qquad \forall j \in \{1, 2, \dots, n\}
    \end{equation} \\
    
    When a job is completed in the first stage, it goes on to the second stage. The completion time of a stage is the release time of the next stage. In addition, since $c_{\sigma_1(j)}$ follows the order $c_{\sigma_1(1)} \leq c_{\sigma_1(2)} \leq \dots \leq c_{\sigma_1(n)}$, alongside the requirement $r^2_1 \leq r^2_2 \leq \dots \leq r^2_n$, we know that $r^2_j$ corresponds to $c^1_{\sigma_1(j)}$, and $p^2_j$ corresponds to $p^1_{\sigma_1(j)}$. Therefore, in stage 2, the release time satisfies
    \begin{equation}
        r^{2}_{j} = c^1_{\sigma_1(j)} \leq  \left( \frac{2m_1-1}{m_1 s_1} \right) \cdot p_{\text{max}} + \frac{1}{\left(ms\right)^*} \sum_{l = 1}^{j-1} p^{2}_{l}  
        \qquad \forall j \in \{1, 2, \dots, n\}
    \end{equation} \\

    Again, by theorem~\ref{thm:completion-time-one-stage}, the completion time of stage 2 satisfies
    \begin{equation}
        c^{2}_{\sigma_2(j)} \leq  \left( \frac{2m_1-1}{m_1 s_1} + \frac{2m_2-1}{m_2 s_2} \right) \cdot p_{\text{max}} + \frac{1}{\left(ms\right)^*} \sum_{l = 1}^{j-1} p^{2}_{\sigma_2(l)}      
        \qquad \forall j \in \{1, 2, \dots, n\}
    \end{equation} \\

    By induction over each stage, we can obtain the completion time at the last stage, stage $k$.
    \begin{equation}
          c^{k}_{\sigma_k(j)} \leq  \sum_{i=1}^k \left( \frac{2m_i-1}{m_i s_i} \right) \cdot p_{\text{max}} + \frac{1}{\left(ms\right)^*} \sum_{l = 1}^{j-1} p^{k}_{\sigma_k(l)} 
          \qquad \forall j \in \{1, 2, \dots, n\}
    \end{equation} \\
    
    We achieve a upper bound of the makespan in the equilibrium under greedy choices.
    \begin{align} \label{eq:Equi-speed-EQU-makespan-bound}
        T_\text{equ} & \leq \sum_{i=1}^k \cdot \left( \frac{2m_i-1}{m_i s_i} \right) \cdot p_{\text{max}} + \frac{1}{\left(ms\right)^*} \sum_{l = 1}^{n-1} p^{k}_{\sigma_k(l)}
    \end{align}

    The makespan of the optimal scheduling also has two lower bounds. The first bound is required for the largest job to finish execution in all $k$ stages.
    \begin{equation} \label{eq:Equi-speed-OPT-KTmax-bound}
        T_\text{opt} \geq \sum_{i=1}^k \frac{p_\text{max}}{s_i}
    \end{equation}

    The second bound is achieved by bounding on the \textbf{rate limiting stage}. The maximum rate to pass through the rate limiting stage per unit time is $\min_{i=1}^k m_i s_i = \left(ms\right)^*$. Consequently, the makespan of the optimal is at least the total job size divided by the maximum job processing rate of the rate limiting stage.
    \begin{equation} \label{eq:Equal-speed-OPT-1/M-jobsum-bound}
        T_\text{opt} \geq \frac{1}{\left(ms\right)^*}  \sum_{l = 1}^{n} p^{k}_{\sigma^{-1}_k(l)}
    \end{equation}

    By substituting Equation. \ref{eq:Equi-speed-OPT-KTmax-bound} and Equation. \ref{eq:Equal-speed-OPT-1/M-jobsum-bound} into Equation. \ref{eq:Equi-speed-EQU-makespan-bound}, we get
    \begin{align}
        T_\text{equ} &\leq \sum_{l=1}^k \cdot \left( 2-\frac{1}{m_i} \right) \cdot \frac{p_{\text{max}}}{s_i} + \frac{1}{\left(ms\right)^*} \sum_{l = 1}^{n-1} p^{k}_{\sigma_k(l)} \label{eq:T-Nash-multi-stage-final} \\
        & \leq \left( 2-\frac{1}{m_\text{max}} \right) T_\text{opt} + T_\text{opt} = \left( 3-\frac{1}{m_\text{max}} \right) T_\text{opt} \label{eq:T-Nash-multi-stage-final-with-T-OPT}
    \end{align}
    
    From Eq~\ref{eq:T-Nash-multi-stage-final} to Eq~\ref{eq:T-Nash-multi-stage-final-with-T-OPT}, we used the relation $\forall i \in \{1, 2, \dots, k\} \quad \left( 2 - \frac{1}{m_i} \right) \leq \left( 2 - \frac{1}{m_{\max}} \right)$. Finally, we reach an upper bound of the makespan price of anarchy.
    
    \begin{equation}
       \text{Makespan PoA} = \frac{T_\text{equ}}{T_\text{opt}} \leq 3-\frac{1}{m_\text{max}}
    \end{equation}
\end{proof}

Our results on the price of anarchy if single stage and multi-stage machine scheduling can be summarized in table~\ref{tab:our-results}.

\begin{table}[h]
    \centering
    \def\arraystretch{2}
    \begin{NiceTabular}{| c | c | c |}  
        \hline 
              & Lower bound of price of anarchy & Upper bound of price of anarchy\\ 
        \hline \hline
        $1$ stage network & $2- \frac{1}{m}$ & $2- \frac{1}{m}$ \\ \hline
        $k$ stage network & $2- \frac{1}{m}$ & $3- \frac{1}{m}$ \\ \hline
    \end{NiceTabular}
    \caption{Summary of our results of price of anarchy of machine scheduling games. In multi-stage scheduling, $m$ stands for the maximum number of machines in a single stage.}
    \label{tab:our-results}
\end{table}

\section{Conclusion and Future Work}
This paper studied the equilibrium behavior of greedy agents in multi-stage scheduling games. We assume that each task follows a greedy strategy and gets assigned to the least-loaded machine upon arrival at each stage. We observed that the greedy choice may result in a different equilibrium compared with the subgame perfect Nash equilibrium under sequential choices. In addition, we compare the makespan of multi-stage machine scheduling game under greedy choices with the optimal scheduling. In single-stage identical machine scheduling, the price of anarchy of greedy agents is exactly $2 - \frac{1}{m}$. When we expand to multi-stage scheduling games, the price of anarchy is between the range $\left[2 - \frac{1}{m_{\max}}, 3 - \frac{1}{m_{\max}} \right]$. 

Throughout this paper, we discussed the case when each player assigns job to machines under greedy choices. For the case where each player chooses machines under subgame perfect Nash equilibrium, the sequential price of anarchy is still an open problem. Furthermore, a natural extension  is to generalize the speed in a single stage to different machine types. When every stage consists of identical machine (with machine speeds in different stages allowed to be different), we show that the multi-stage price of anarchy has an upper bound of $3$, which is only a constant worse than the single-stage price of anarchy. Similar behavior in multi-stage scheduling on other machine types also remains an open question for future research. 

\bibliographystyle{splncs04}
{\footnotesize 

}

\clearpage

\appendix
\section{Greedy Choice and Subgame Perfect Nash Equilibrium} \label{app:Greedy-choice-SPNE}

The equilibrium when players sequentially select a greedy choice does not always lead to a subgame perfect Nash equilibrium. We will provide an example where the two equilibrium are different. 

\begin{figure} [h]
    \centering
    \begin{tikzpicture}[
        roundnode/.style={circle, draw=green!60, fill=green!5, very thick, minimum size=7mm},
        squarednode/.style={rectangle, draw=red!60, fill=red!5, very thick, minimum size=5mm},
    ]
        \node [roundnode] (source) {$P_0$};
        \node [roundnode] (midpoint1) at (5,0) {$P_1$};
        \node [roundnode] (midpoint2) at (10,0) {$P_2$};
        
        \node [roundnode] (sink) at (15,0) {$P_3$};
        \draw (source) .. controls (0.5,0) and (4,0) .. (midpoint1);
        \node at (2.5, 0.8) {Stage $1$ Machine $1$};
        \node at (2.5, 0.35) {Speed $s_1 = 1$};

        \draw (midpoint1) .. controls (6,2) and (9,2) .. (midpoint2);
        \node at (7.5, 2.3) {Stage $2$ Machine $1$};
        \node at (7.5, 1.9) {Speed $s_2 = 5$};

        \draw (midpoint1) .. controls (6,-2.05) and (9,-2.05) .. 
        (midpoint2);
        \node  at (7.5, -0.75) {Stage $2$ Machine $2$};
        \node at (7.5, -1.15) {Speed $s_2 = 5$};

        \draw (midpoint2) .. controls (10.5,0) and (14,0) .. (sink);
        \node at (12.5, 0.8) {Stage $3$ Machine $1$};
        \node at (12.5, 0.35) {Speed $s_3 = 0.1$};
        
    \end{tikzpicture}
    \caption{A multi-stage machine game where greedy choice does not result in subgame perfect Nash equilibrium}
    \label{fig:Greed-choice-SPNE-example}
\end{figure}
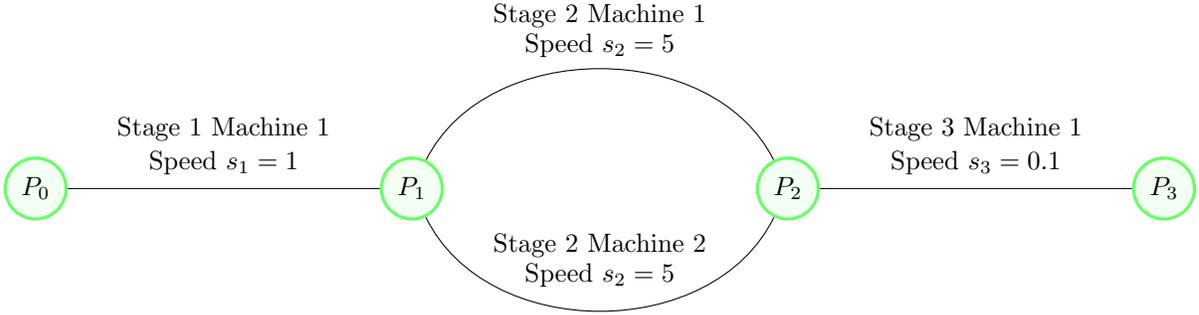

We have a multi-stage machine game with the network showw in figure~\ref{fig:Greed-choice-SPNE-example}. Suppose we have two jobs with job sizes $10$ and $1$ to process. The larger job is ordered before the smaller job. Under greedy policy, the larger job executes first in the first stage, but in the second stage, the smaller job finishes faster than the first stage. The larger job has to wait for the smaller job in the third stage. In fact, the larger job can actually improve its final completion time by letting the smaller job first in the first stage, reducing waiting time in the final stage. That is exactly what happens in the subgame perfect Nash equilibrium, and the larger job ends up having a earlier completion time. We will analyze the detail of the greedy policy and the subgame perfect Nash equilibrium in the following paragraphs.

Under greedy choice, the larger job is assigned to go first in the first stage, and completes the execution of the first stage at time $c^1_1 = 10$. The smaller job immediately begins execution at the first stage at $t=10$, and finishes execution at first stage  at time $c^1_2 = 11$. In the second stage, regardless of which machine the larger job is assigned to, the smaller job chooses the other idle machine. We have the completion time of the larger job $c^2_1 = 12$, and the completion time of the smaller job $c^2_2 = 11.2$. The smaller job gets released first in the third stage, and gets executed first in the third stage, with $c^3_1 = 21.2$. The larger job waits for the smaller job in the third stage, and ends with completion time $c^3_2 = 121.2$.

Now suppose the larger job lets the smaller job to execute first in the first stage. For the small job, the first stage completion time is $c^1_2=1$, and for the larger job, the first stage completion time is $c^1_1 = 11$. In the second stage, the smaller job is released first and finishes execution at $c^2_1 = 1.2$, while the larger job finishes at $c^2_2= 13$. The smaller job enters the third stage first and finishes execution at $c^3_1=11.2$. When the larger job is released into the third stage, the smaller job has already finished execution, so the larger job finishes at $c^3_2=113$, faster than the greed choice scenario. A summary of release times and completion times in each stage is listed in table~\ref{tab:example-of-greedy-neq-SPNE}. 

\begin{table}[h]
    \centering
    \def\arraystretch{2}
    \begin{NiceTabular}{| c | c | c | c | c | c | c | c |}  
        \hline 
        Policy & Task & $r^1_j$ & $c^1_j$ & $r^2_j$ & $c^2_j$ & $r^3_j$ & $c^3_j$ \\ 
        \hline \hline
        \Block{2-1}{\makecell{Greedy policy \\(Large task chooses\\ to execute first)}} 
          & Size $10$ & $r^1_1 = 0$ & $c^1_1=10$ & $r^2_1 = 10$ & $c^2_1=12$ & $r^3_2 = 12$ & $c^3_2=121.2$\\  \cline{2-8}
          & Size $1$  & $r^1_2 = 0$ & $c^1_2=11$ & $r^2_2 = 11$ & $c^2_2=11.2$ & $r^3_1 = 11.2$ & $c^3_1=21.2$\\ \hline
        \Block{2-1}{\makecell{Subgame Perfect NE \\(Large task chooses to \\ execute after smaller task)}} 
          & Size $10$ & $r^1_1 = 0$ & $c^1_1=11$ & $r^2_1 = 11$ & $c^2_2=13$ & $r^3_2 = 13$ & $c^3_2=113$\\  \cline{2-8}
          & Size $1$  & $r^1_2 = 0$ & $c^1_2=1$ & $r^2_1 = 1$ & $c^2_1=1.2$ & $r^3_1 = 1.2$ & $c^3_1=11.2$\\ \hline
    \end{NiceTabular}
    \caption{Summary of release times and completion times of greedy choice policy and the subgame perfect Nash equilibrium}
    \label{tab:example-of-greedy-neq-SPNE}
\end{table}

\end{document}